\newtheorem{problem}{Problem}
\newtheorem{definition}{Definition}
\newtheorem{lemma}{Lemma}
\newtheorem{theorem}{Theorem}{}
{}
\newtheorem{remark}{Remark}{}
\begin{document}

\title{\LARGE \bf
Enforcing Safety under Actuator Attacks through Input Filtering
}

\author{C\'{e}dric Escudero$^{1}$, Carlos Murguia$^{2}$, Paolo Massioni$^{3}$, Eric Zama\"i$^{4}$
\thanks{$^{1}$C\'{e}dric Escudero is with Laboratoire Amp\`{e}re CNRS, INSA Lyon, Universit\'{e} de Lyon, 69621 Villeurbanne CEDEX, France,
        {\tt\small Email: cedric.escudero@insa-lyon.fr}}%
\thanks{$^{2}$Carlos Murguia is with the Department of Mechanical Engineering, Dynamics and Control Group, Eindhoven University of Technology, The Netherlands,
        {\tt\small Email: c.g.murguia@tue.nl}}%
\thanks{$^{3}$Paolo Massioni is with Laboratoire Amp\`{e}re CNRS, INSA Lyon, Universit\'{e} de Lyon, 69621 Villeurbanne CEDEX, France,
		{\tt\small Email: paolo.massioni@insa-lyon.fr}}%
\thanks{$^{1}$Eric Zama\"{i} is with Laboratoire Amp\`{e}re CNRS, INSA Lyon, Universit\'{e} de Lyon, 69621 Villeurbanne CEDEX, France,
		{\tt\small Email: eric.zamai@insa-lyon.fr}}%
}

\maketitle
\thispagestyle{empty}
\pagestyle{empty}

\begin{abstract}
Actuator injection attacks pose real threats to all industrial plants controlled through communication networks. In this manuscript, we study the possibility of constraining the controller output (i.e. the input to the actuators) by means of a dynamic filter designed to prevent reachability of dangerous plant states -- preventing thus attacks from inducing dangerous states by tampering with the control signals. The filter synthesis is posed as the solution of a convex program (convex cost with Linear Matrix Inequalities constraints) where we aim at shifting the reachable set of control signals to avoid dangerous states while changing the controller dynamics as little as possible. We model the difference between original control signals and filtered ones in terms of the H-infinity norm of their difference, and add this norm as a constraint to the synthesis problem via the bounded-real lemma. Results are illustrated through simulation experiments.
\end{abstract}

\section{Introduction}
Integration of information and communication technologies is rapidly increasing in many industrial applications and critical infrastructure. This integration leads to engineered systems being controlled by embedded computer devices over communication networks \cite{Lee_CPS} -- the so-called Network Control Systems NCSs. However, next to the advantages that the use of NCSs might offer, they possess increased vulnerabilities against adversarial attacks at the cyber-layer (software, computing hardware, and communications). NCSs are often opened to the outside world  (e.g., for remote control and maintenance via the internet or cellular 4G/5G networks) \cite{Laughlin}, and adversaries exploit the cyber-layer to launch cyberattacks, even to industrial NCSs \cite{fromics}. These security challenges have emerged as new research objectives for the control engineering community \cite{DIBAJI2019394}. In particular, \textit{deception} attacks  (e.g., spoofing, false-data injection, and replay attacks \cite{Teixeira}) have attracted considerable attention. These attacks tamper with systems' signals (sensing and control) to degrade their performance.

The secure control literature has come up with different methods to prevent deception attacks. Such methods protect the plant by redesigning controllers that mitigate the degradation of the plant induced by attacks \cite{Giraldo}. Among the prevention measures, set-theoretic methods \cite{BlanchiniBook} have already shown their potential to enforce that the plant states avoid dangerous states -- a subset of the state space that, if reached, compromises the system integrity and leads to system degradation. Set-theoretic methods range from the synthesis of secure controllers \cite{MURGUIA2020108757} in terms of invariant ellipsoids to the design of artificial controller saturation  \cite{Kafash,Kafashb,Kafash_des}. Such methods aim to prevent any class of attacks injecting signals into the communication network to damage the system. Other work \cite{Sinopoli_reach,Soodeh_rch,8712588} uses set-theoretic methods to model stealthy attacks and design fault detectors to mitigate the degradation of the plant \cite{Sinopoli_reach,8712588}.

In this manuscript, we focus on \textit{actuator injection attacks}, a class of deception attacks that inject malicious control signals into the plant. We assume the adversary is capable of injecting signals to true control actions by compromising either the controller, or the communication network that transmits the control actions to the actuators. We are interested in attacks that aim to damage the integrity of the system while they do not change the ``normal'' behaviour of the plant (e.g.\ reference tracking or stabilization to a set). We refer to this class of attacks as \textit{stealthy actuator attacks}. Previous results are reported in \cite{esc_iet}, which proposes preventing such attacks by limiting the controller output (the input to the actuators). The main drawback of this work is that it leads to strong limitations on the controller output as its modulus has to be smaller than a predefined threshold at all times. The latter can sometimes be too restrictive and prevent the achievement of the control objective. Here we propose an improvement to \cite{esc_iet} by allowing for dynamic thresholds. We achieve this by filtering the controller output before it reaches the actuators, and seeking for the filter dynamics that provides safety guarantees in the sense of avoiding dangerous states. Using this filter let us fine tune the limitations we want to enforce on the controller output, which leads to less conservative results compared to the static threshold considered in \cite{esc_iet} -- as now we have the freedom to impose constrains in the frequency domain. 

The rest of this manuscript is organized as follows. Section~\ref{sec:problem} formulates the problem we seek to address. Section~\ref{sec:tools} provides necessary mathematical tools (reachability and ellipsoidal approximations of reachable sets) to perform the filter synthesis, which is given in Section~\ref{sec:result}. The synthesis procedure is written in terms of a series of convex programs subject to Linear Matrix Inequalities (LMIs) constraints. Section~\ref{sec:example} illustrates the performance of our tools by simulations experiments.

\noindent
\emph{\textbf{Notation:}} The symbol $\mathbb{R}$ stands for the real numbers, $\mathbb{R}^{n \times m}$ is the set of real ${n \times m}$ matrices, and $\mathbb{R}_{>0}$ ($\mathbb{R}_{\geq 0}$) denotes the set of positive (non-negative) real numbers. Matrix $A^\top$ indicates the transpose of matrix $A$ and diag($a_1,...,a_n$) corresponds to a diagonal matrix with diagonal elements $a_1,...,a_n$.  The identity matrix of dimension $n$ is denoted by $I_n$, and $\mathbf{0}$ is a matrix of only zeros of appropriate dimensions. The notation $A \succeq 0$ (resp. $A \preceq 0$) indicates that the matrix $A$ is positive (resp. negative) semidefinite, i.e., all the eigenvalues of the symmetric matrix $A$ are positive (resp. negative) or equal to zero, whereas the notation \(A \succ 0\) (resp. \(A \prec 0\)) indicates the positive (resp. negative) definiteness, i.e., all the eigenvalues are strictly positive (resp. negative). The notation $\mathcal{E}_x(Q)$ stands for an ellipsoid of dimension $n$ with shape matrix $Q \in \mathbb{R}^{n \times n},\, Q = Q^\top  \succ 0$ and centered at zero, i.e.,
 $\mathcal{E}_x(Q):=\{x \in \mathbb{R}^n\, |\,x^\top Q x  \leqslant 1  \}$.

\section{PROBLEM FORMULATION} \label{sec:problem}
In this section, we introduce the class of systems and attacks under study, the problem formulation, and the proposed solution based on input filtering.

\begin{figure*}[t!]
\centering
\includegraphics[width=0.7\textwidth]{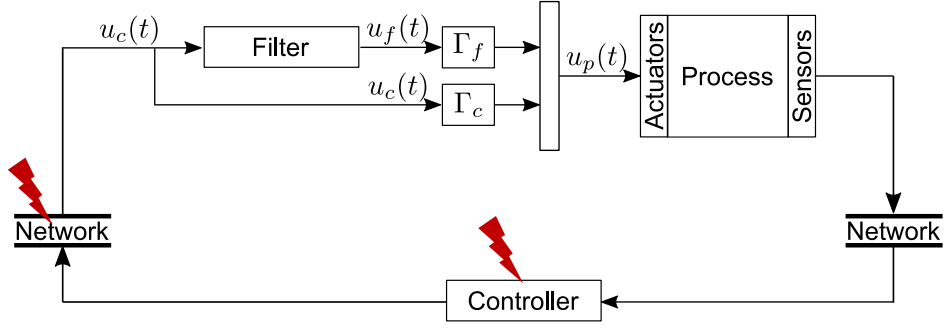}
\caption{Control system with the input filtering to prevent actuator injection attacks}
\label{fig:sys_with_filter}
\end{figure*}

\subsection{System Dynamics}
We consider linear time-invariant systems of the form:
\begin{equation} \label{eq:system}
\dot{x}_p(t) = A_p x_p(t) + B_p u_p(t),
\end{equation}
with time $t \in \mathbb{R}_{>0}$, state $x_p(t) \in \mathbb{R}^{n_p}$, control input $u_p(t) \in \mathbb{R}^{m}$, system matrices $A_p \in  \mathbb{R}^{n_p \times n_p}$ and $B_p  \in \mathbb{R}^{n_p \times m}$, and controllable pair $(A_p,B_p)$. Matrix $A_p$ is Hurwitz, i.e., the origin of \eqref{eq:system}, with $u_p(t)=\mathbf{0}$, $t \in \mathbb{R}$, is globally asymptotically stable.

The system description in \eqref{eq:system} comprises the actuators and process dynamics (that is the plant), i.e., some states are due to actuators and some other are due to physical variables in the system. The system is assumed to be part of a control-loop as illustrated in Figure~\ref{fig:sys_with_filter}, where \emph{the control block receives some of the states $x_p(t)$ (the measured states) to compute control actions $u_c(t)$, which are sent back to the actuators/process through public/unsecured communication networks}.

\begin{remark}
In the standard configuration, we have $u_c(t)=u_p(t)$, i.e, signals sent by the controller $u_c(t)$ equal applied control inputs $u_p(t)$. We make a distinction here because when we place the proposed safety-preserving filter in the loop, $u_c(t)$ and $u_p(t)$ will be in general different (as $u_p(t)$ will depend on the filter dynamics), see Figure~\ref{fig:sys_with_filter}.
\end{remark}

Control actions coming from the control block, $u_c(t)$, are peak-bounded inside a \emph{known} ellipsoid $\mathcal{E}_u(R,\bar{u})$ representing amplitude limitations of actuation signals (physical or imposed by design), i.e., control inputs $u_c(t)$ belong to the ellipsoidal set
\begin{equation} \label{eq:cons_input}
\mathcal{E}_u(R,\bar{u}) := \left\{ u \in \mathbb{R}^m |(u(t) - \bar{u})^\top R (u(t) -\bar{u}) \leqslant 1 \right\},
\end{equation}
for some known positive definite matrix $R \in \mathbb{R}^{m\times m}$ and vector $\bar{u} \in \mathbb{R}^{m}$.

\subsection{Adversarial Capabilities}

In this manuscript, we focus on \textit{False Data Injection attacks to actuators}. That is, we assume the adversary is capable of injecting signals to true control actions, $u_c(t)$, by compromising either the controller block itself (for instance by hacking into the processor) or the communication network that transmits $u_c(t)$ to actuators (see Figure~\ref{fig:sys_with_filter}). We consider two types of False Data Injection attacks to actuators, \emph{stealthy and non-stealthy}. Stealthy actuator attacks aim to damage the integrity of the system while letting the process states to operate normally. That is, they are attacks that do not change the ``normal" behaviour of the process (for instance, reference tracking or stabilization to a set) but drive the system dynamics to a part of the state space where physical degradation occurs -- e.g., car collisions, pipes breaking, accelerated wear and tear, explosions in power generators, etc. Non-stealthy attacks aim to induce fast and/or large damage regardless of their chances of being detected (thus they are not constrained by the normal operation set). Before we give a formal definition of these attacks, we introduce the notions of normal operation sets and safe sets.

\begin{definition}[Normal Operation Set] \label{def:attack_Type_N}
The normal operation set $X_n \subseteq \mathbb{R}^{n_p}$ for system \eqref{eq:system} is the set of states $x_p \in X_n$ where trajectories of the \emph{attack-free} system dynamics are expected to be contained in.
\end{definition}

So, if the states trajectories $x_p(t)$ remain inside the set $X_n$, the system is expected to be operating as usual, and thus no suspicion of attacks can arise. Hence, if adversaries aim to be stealthy, the trajectories of the attacked system must remain inside $X_n$. Stealthiness heavily constrains what the adversary can induce in the system as she/he is restricted to trajectories that are somehow standard in the process (and thus nondestructive). It follows that fast and/or large degradation might be hard to accomplish via stealthy attacks. On the other hand, non-stealthy attacks are easily spotted, which would make it easier for operators to run counter measures. There is a trade-off here, stealthy attacks lead to smaller but persistent degradation, and non-stealthy to larger/faster damage but short-lived attacks. Here we cover both stealthy and non-stealthy attacks. Results vary slightly from one case to the other. The difference mainly lies in the use of the normal operation set $X_n$ when we consider stealthy attacks.

We now need a degradation metric to make sense of the system safety level in the presence of actuator attacks. To this end, we introduce the following notion of \emph{safe sets}.

\begin{definition}[Safe Set] \label{def:attack_Type_safe}
The safe set $X_s \subseteq \mathbb{R}^n$ for system \eqref{eq:system} is the set of states $x_p \in X_s$ where the safe and proper operation of the system is guaranteed. The safe set $X_s$ is the part of the state space that excludes critical states -- states that, if reached, compromise the system physical integrity.
\end{definition}

Critical states might represent states in which, for instance, the pressure of a holding vessel exceeds its pressure rating, negative inter-vehicle distances lead to collisions in cooperative driving, or the level of a liquid in a tank exceeds its capacity. Safe sets exclude, by definition, all critical states from the state space of \eqref{eq:system}.

\begin{definition}[Actuator (Stealthy) Attacks] \label{def:attack_Type_S}
Attacks that tamper with control inputs by injecting signals to true control actions, $u_c(t)$, and aim to degrade the operation of the system dynamics by pushing trajectories outside the safe set $X_s$ (while keeping them inside the normal operation set $X_n$ for stealthiness).
\end{definition}

\subsection{Safety-Preserving Filters}
We propose to protect the plant against actuator attacks by filtering control actions, $u_c(t)$, before they reach the actuators. That is, we pass $u_c(t)$ through a filter to enforce, by design, that it is impossible for actuator attacks to drive the system outside the safe set. The filter output, $u_f(t)$, is the filtered control signal that is sent to the actuators, see Figure~\ref{fig:sys_with_filter}. We consider linear time-invariant filters of the form:
\begin{equation} \label{eq:filter}
\begin{gathered}
\dot{x}_f(t) = A_fx_f(t) + B_fu_c(t),\\
u_f(t) = C_f x_f(t) + D_f u_c(t),
\end{gathered}
\end{equation}
with filter state $x_f(t) \in \mathbb{R}^{n_f}$, filter input $u_c(t) \in \mathbb{R}^{m}$ (the original control signal sent by the control block), filter output $u_f(t) \in \mathbb{R}^{m}$ to be transmitted to the plant, and  matrices $A_f \in \mathbb{R}^{n_f \times n_f}$, $B_f \in \mathbb{R}^{n_f \times m}$, $C_f \in \mathbb{R}^{m \times n_f}$, and $D_f \in \mathbb{R}^{m \times m}$ to be designed. We allow for partial filtering in the sense that not all filtered inputs $u_f$ reach the plant. We allow for some $u_c$ to get through the filter and reach it directly. It follows that the control signal driving the plant, $u_p$, can be written as follows:
\begin{equation}
u_p(t) = \Gamma_c u_c(t) + \Gamma_f u_f(t),
\end{equation}
where $\Gamma_c \in \mathbb{R}^{m \times m}$ and $\Gamma_f \in \mathbb{R}^{m \times m}$ are diagonal matrices used for selecting which control signals are unfiltered and filtered, respectively. Matrices $\Gamma_c$ and $\Gamma_f$ satisfy
\begin{equation}
\Gamma_c + \Gamma_f = I_m.
\end{equation}
Define the extended state $\zeta := [x_p^\top, x_f^\top]^\top \, \in \mathbb{R}^n$, $n = n_p + n_f$. Then, the filter and plant can be stacked together as:
\begin{equation} \label{eq:extended_system}
\dot{\zeta}(t) = A \zeta(t) + B u_c(t),
\end{equation}
with
\begin{equation} \label{eq:matrice}
A := \begin{bmatrix} A_p & B_p \Gamma_f C_f\\
					\mathbf{0} & A_f\end{bmatrix},
								  	 \quad B :=
								  	\begin{bmatrix}B_p \Gamma_f D_f + B_p \Gamma_c\\ B_f\end{bmatrix},						  	
\end{equation}
$A \in \mathbb{R}^{n \times n}$, and $B \in \mathbb{R}^{n \times m}$.

Now we can state the problem we seek to address.

\begin{problem}\label{problem1a}
Given the system dynamics \eqref{eq:system}, the filter \eqref{eq:filter}, the safe-set $X_s$ in Definition \ref{def:attack_Type_safe}, and the normal operation set $X_n$ in Definition \ref{def:attack_Type_N}, find (if possible) filter matrices ($A_f$,$B_f$,$C_f$,$D_f$) such that the system asymptotic trajectories are contained in $X_s$ for all actuator attacks satisfying Definition \ref{def:attack_Type_S}
\end{problem}

The solution to Problem 1 aims to enforce that the steady state trajectories of \eqref{eq:system}, in series interconnection with the filter \eqref{eq:filter}, are constrained inside the safe set $X_s$.

\section{Preliminary Tool} \label{sec:tools}


We first introduce the reachable set of the extended system \eqref{eq:extended_system} as we will work on this set to enforce safety.

\begin{definition}[Reachable Set \cite{Gayek}] \label{def:reachset2}
The reachable set $\mathcal{R}_{\zeta}(t)$ at time $t \in \mathbb{R}_{>0}$ from initial condition $\zeta(t_0) \in \mathbb{R}^{n}$ is the set of extended states $\zeta(t)$ that satisfy the extended differential equations \eqref{eq:extended_system}, over all control actions $u_c(t)$ satisfying $u_c(t) \in \mathcal{E}_u(R,\bar{u})$, i.e.,
\begin{equation}\label{reachset1}
\mathcal{R}_{\zeta}(t):= \left\{ \zeta(t) \left|
\begin{split}
&\zeta(t_0) \in \mathbb{R}^{n},\\
&\zeta(t) \hspace{1mm}\text{satisfies \eqref{eq:extended_system}},\\
&\text{and }u_c(t) \in \mathcal{E}_u(R,\bar{u}).
\end{split}
\right.
\right\}.
\end{equation}
\end{definition}
We denote the asymptotic reachable set (the ultimate bound on $\mathcal{R}_{\zeta}(t)$) as $\mathcal{R}_{\zeta}(\infty) := \lim_{t \rightarrow \infty }\mathcal{R}_{\zeta}(t)$. Note that, because the input $u_c$ is bounded, the asymptotic set $\mathcal{R}_{\zeta}(\infty)$ always exists if $A$ in \eqref{eq:extended_system} is Hurwitz (which is true when the filter matrix $A_f$ and the plant matrix $A_p$ are both Hurwitz because of the block triangular structure of $A$).

\subsection{Ellipsoidal Bound on $\mathcal{R}_{\zeta}(\infty)$}
The set-theoretic method proposed in this manuscript for synthesizing the filter is based on outer approximations of the asymptotic reachable set $\mathcal{R}_{\zeta}(\infty)$ of \eqref{eq:extended_system}. Because the exact computation of $\mathcal{R}_{\zeta}(\infty)$ is not tractable, the proposed method relies on an outer ellipsoidal approximation $\mathcal{E}_\zeta(Q)$ of $\mathcal{R}_{\zeta}(\infty)$, i.e., $\mathcal{R}_{\zeta}(\infty) \subseteq \mathcal{E}_\zeta(Q)$ (referred hereafter as an ellipsoidal bound on $\mathcal{R}_{\zeta}(\infty)$).

\begin{definition} \label{def_invariant set}
The ellipsoidal set $\mathcal{E}_\zeta(Q)$ is invariant for the dynamical system \eqref{eq:extended_system}, if for all initial states $\zeta(t_0) \in \mathcal{E}_\zeta(Q)$, and all $u_c(t) \in \mathcal{E}_u(R,\bar{u})$, the trajectories $\zeta(t)$ of \eqref{eq:extended_system} satisfy $\zeta(t) \in \mathcal{E}_\zeta(Q), \forall \hspace{1mm} t \ge t_0$.
\end{definition}


\begin{remark}
Note that, by definition, any invariant ellipsoidal set $\mathcal{E}_\zeta(Q)$, in the sense of Definition \ref{def_invariant set}, contains the reachable set $\mathcal{R}_{\zeta}(\infty)$ in \eqref{reachset1}. Hence, any invariant set $\mathcal{E}_\zeta(Q)$ is an ellipsoidal bound on $\mathcal{R}_{\zeta}(\infty)$.
\end{remark}

In a previous work \cite{esc_iet}, we have provided sufficient conditions for ellipsoid sets to be invariant for a class of LTI systems. This method is based on the search of a Lyapunov-like function, $V(\zeta)=\zeta^\top Q \zeta$, using Linear Matrix Inequalities (LMIs) \cite{boyd1994linear}. Before recalling this result, we model the normal operation set, $X_n$, defined in Definition~\ref{def:attack_Type_N} as an ellipsoid $\mathcal{E}_n(\Xi,\bar{\xi})$ satisfying

\begin{equation} \label{eq:cons_output}
(\zeta(t) - \bar{\xi})^\top \Xi (\zeta(t) -\bar{\xi}) \leqslant 1,
\end{equation}
with
\begin{equation} \label{eq:matricess}
\Xi = \begin{bmatrix} \Xi_p & \mathbf{0}\\
					\mathbf{0} & \mathbf{0}\end{bmatrix}
\,
,
\,
\bar{\xi} = \begin{bmatrix} \bar{\xi}_p\\ \mathbf{0}\end{bmatrix},						  	
\end{equation}
for some known positive semi-definite matrix $\Xi_p \in \mathbb{R}^{n_p \times n_p}$ and vector $\bar{\xi}_p \in \mathbb{R}^{n_p}$. Note that $\Xi_p$ is in general rank-deficient, as it only constrains some of the plant states $x_p(t)$ --  $\mathcal{E}_n(\Xi,\xi)$ can even coincide with $\mathbb{R}^{n\times n}$ by picking $\Xi_p = 0$.

Next, we state the preliminary tool used to find invariant ellipsoidal sets for the extended dynamics \eqref{eq:extended_system}. Once we have found such a set, we use it as an ellipsoidal bound to the reachable set $\mathcal{R}_{\zeta}(\infty)$ in \eqref{reachset1} (Remark 2).

\begin{lemma}[Invariant Ellipsoidal Set]
Consider system \eqref{eq:extended_system} with system matrices as defined in \eqref{eq:matrice}. If there exist matrix $Q \in \mathbb{R}^{n \times n}$ and constants $\alpha,\beta,\lambda \in \mathbb{R}_{\geq 0}$ satisfying the following inequalities:

\begin{equation} \label{eq:thun1}
-E - \alpha F - \beta S - \lambda T \succeq 0,
\end{equation}
\begin{equation} \label{eq:thun2}
Q  \succ 0,
\end{equation}
with
\begin{align}
E = &\begin{bmatrix} \label{eq:lem_M}
A^\top Q + Q A\,  & \mathbf{0}\, & Q B\\
\ast\, &  \mathbf{0}\, & \mathbf{0}\\
\ast\, &  \ast\, & \mathbf{0}
\end{bmatrix},\\
F = &\begin{bmatrix} \label{eq:lem_N}
Q\, & \mathbf{0}\,  & \mathbf{0}\\
\ast\, & -1\, & \mathbf{0}\\
\ast\, & \ast\, & \mathbf{0}
\end{bmatrix},\\
S = &\begin{bmatrix}
\mathbf{0}\, & \mathbf{0}\, & \mathbf{0}\\
\ast\, & 1-\bar{u}^\top R \bar{u}\, & \bar{u}^\top R\\
\ast\, & \ast\, & -R
\end{bmatrix},\\
T = &\begin{bmatrix}
-\Xi\, & \Xi \bar{\xi}\, & \mathbf{0}\\
\ast\, & 1-\bar{\xi}^\top \Xi \bar{\xi}\, & \mathbf{0}\\
\ast\, & \ast\, & \mathbf{0}
\end{bmatrix};
\end{align}
\noindent then,
$\zeta(t_0) ^\top Q \zeta(t_0) \leqslant 1 \Rightarrow
\zeta(t) ^\top Q \zeta(t) \leqslant 1$, for all $t \ge t_0$, $u_c(t) \in \mathcal{E}_u(R, \bar{u})$, and $\zeta(t) \in \mathcal{E}_n(\Xi, \bar{\xi})$.
\label{lem:invar}
\end{lemma}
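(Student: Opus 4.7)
The plan is to apply the S-procedure to a Lyapunov non-increase condition at the boundary of the candidate ellipsoid $\mathcal{E}_\zeta(Q)$. Take $V(\zeta) := \zeta^\top Q \zeta$, which is a bona fide positive definite quadratic form by \eqref{eq:thun2}, and introduce the augmented vector $w := [\zeta^\top,\,1,\,u_c^\top]^\top \in \mathbb{R}^{n+1+m}$ so that all relevant quantities become quadratic forms in $w$. By direct expansion of the four block matrices in the lemma, I would check that $w^\top E w = \dot V(\zeta,u_c) = 2\zeta^\top Q(A\zeta + B u_c)$ along trajectories of \eqref{eq:extended_system}, that $w^\top F w = \zeta^\top Q\zeta - 1$, that $w^\top S w = 1-(u_c-\bar u)^\top R(u_c-\bar u)$, and that $w^\top T w = 1-(\zeta-\bar\xi)^\top\Xi(\zeta-\bar\xi)$. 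These identities are routine: the fixed middle entry $1$ of $w$ supplies the constants in the ellipsoid descriptions, and the off-diagonal $\bar u^\top R$ and $\Xi\bar\xi$ blocks produce the cross terms that arise when the squared norms are expanded.

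Next I would invoke the S-procedure on \eqref{eq:thun1}. Pre- and post-multiplying $-E-\alpha F-\beta S-\lambda T \succeq 0$ by $w^\top$ and $w$ yields, for every $w$,
\begin{equation*}
-\,w^\top E w \;\geq\; \alpha\, w^\top F w \,+\, \beta\, w^\top S w \,+\, \lambda\, w^\top T w,
\end{equation*}
with $\alpha,\beta,\lambda\geq 0$. Under the standing hypotheses $u_c(t)\in\mathcal{E}_u(R,\bar u)$ and $\zeta(t)\in\mathcal{E}_n(\Xi,\bar\xi)$, the last two terms on the right are non-negative; hence whenever additionally $\zeta(t)^\top Q\zeta(t)\geq 1$ (so that $w^\top F w\geq 0$), we conclude $\dot V(\zeta(t),u_c(t))\leq 0$. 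In particular, on the boundary $\zeta^\top Q\zeta = 1$ the Lyapunov function is non-increasing for all admissible inputs and states.

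To close, I would use a standard continuity argument. Assume $\zeta(t_0)^\top Q \zeta(t_0) \leq 1$ and suppose, for contradiction, that $\zeta(t_1)^\top Q \zeta(t_1) > 1$ for some $t_1 > t_0$; continuity of $V(\zeta(\cdot))$ then forces an intermediate time at which $V = 1$ followed by a strict increase of $V$, contradicting the non-increase property established above. I expect the only real obstacle to be bookkeeping: carefully lining up sign conventions in the four quadratic forms so that the LMI \eqref{eq:thun1} reads out to exactly the desired implication. Once that identification is in place, the S-procedure and continuity steps are completely standard, and the positivity condition \eqref{eq:thun2} merely guarantees that $\mathcal{E}_\zeta(Q)$ is a well-defined ellipsoid.
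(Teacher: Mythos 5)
Your proof is correct and follows essentially the same route the paper takes: the paper defers the proof of Lemma~\ref{lem:invar} to its cited prior work, but the argument it gives for the analogous condition \eqref{eq:thTwo1} in Theorem~\ref{th:two} is exactly your construction --- multiply by $[\zeta^\top,1,u_c^\top]^\top$, read off $\dot V$, $V-1$, and the two ellipsoid memberships as the quadratic forms of $E$, $F$, $S$, $T$, and apply the S-procedure to conclude $\dot V\leq 0$ whenever $V\geq 1$ under the admissibility constraints. Your quadratic-form identifications and the closing continuity argument are all sound.
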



\section{Solution to Problem 1} \label{sec:result}

In this section, we propose a synthesis framework, built around Lemma~\ref{lem:invar}, to find filter matrices solving Problem 1 in terms of the solution of a series of semidefinite programs. 

\noindent To prevent damage from actuator injection attacks, the plant states $x_p(t)$ need to remain inside the safe set $X_s$. We model $X_s$ as an ellipsoid $\mathcal{E}_s(\Psi, \bar{\psi})$ with positive semi-definite $\Psi \in \mathbb{R}^{n_p \times n_p}$ and vector $\bar{\psi} \in \mathbb{R}^{n_p}$. Matrix $\Psi$ is in general rank-deficient, as only one part of the physical state might be subject to the safe zone. Hence, to enforce safety, we want to guarantee the following

\begin{equation}
x_p(t) \in \mathcal{E}_{x_p}(Q_{x_p}) \Rightarrow x_p(t) \in \mathcal{E}_s(\Psi, \bar{\psi}),\,\,\, 
\end{equation}
where $\mathcal{E}_{x_p}(Q_{x_p})$ is the projection of $\mathcal{E}_\zeta(Q)$ onto the $x_p$-hyperplane -- because we are only interested in safety of the plant states, not the filter states. Matrix $Q_{x_p} \in \mathbb{R}^{n_p \times n_p}$ can be written in terms of matrix $Q$ of $\mathcal{E}_\zeta(Q)$ as  $Q_{x_p} = Q_1 - Q_2 Q_3^{-1} Q_2^\top$, where  \begin{equation} \label{eq:proj}
\begin{bmatrix}
Q_1\, & Q_2\\
\ast\, & Q_3
\end{bmatrix} := Q,
\end{equation}
with $Q_1 \in \mathbb{R}^{n_p \times n_p}$, $Q_2 \in \mathbb{R}^{n_p \times n_f}$, $Q_3 \in \mathbb{R}^{n_f \times n_f}$, see \cite{MURGUIA2020108757} for details.

Note that, by filtering $u_c$, we are changing the dynamics of control signals. We do not want to make $u_c$ and the filtered $u_f$ overly different. To this end, we introduce a distortion constraint in terms of the $H_\infty$-norm between the original, $u_c$, and filtered $u_f$. Define the error $z(t) := u_f(t) - u_c(t)$. It is easy to verify that $z(t)$ can be written in terms of the extended state $\zeta(t)$ and $u_c(t)$, as $z(t) = C_z \zeta(t) + D_z u_c(t)$ with matrices:
\begin{equation}
C_z :=
\begin{bmatrix}
\mathbf{0}\, & C_f
\end{bmatrix}
\,
,
\,
D_z :=
\begin{bmatrix}
D_f - I_m
\end{bmatrix}.
\end{equation}
We treat this $z(t)$ as a performance output for the extended dynamics \eqref{eq:extended_system}. For system \eqref{eq:extended_system}, with input $u_c(t)$ and output $z(t)$, let $T_{u_c \rightarrow z}(s)$ denote the transfer matrix from $u_c(t)$ to $z(t)$, i.e., $T_{u_c \rightarrow z}(s) := C_z(sI_n - A)^{-1}B + D_z$. We use the $H_\infty$-norm of $T_{u_c \rightarrow z}(s)$ as a metric to quantify how different $u_c$ and $u_f$ are. If no filter is in place, this norm is trivially zero, and as we let them be more different, the norm will grow unbounded. When designing the filter to enforce safety, we also want to keep the $H_\infty$-norm of $T_{u_c \rightarrow z}(s)$ below a predefined level $\gamma \in \mathbb{R}_{\geq 0}$, i.e. $\lvert \lvert T_{u_c \rightarrow z}(s) \rvert \rvert_{H_\infty} \le \gamma$. We use this gamma to modulate how much we are willing to sacrifice in terms of control performance to enforce safety. 

We have all the ingredients now to re-cast Problem 1 above in terms of our new notation. 

\begin{problem}[Filter Synthesis Problem]
Find the filter matrices $\kappa :=$($A_f$, $B_f$, $C_f$, $D_f$) such that (i) the ellipsoid $\mathcal{E}_\zeta(Q)$ is invariant for the dynamical system in \eqref{eq:extended_system} with $u_c(t) \in \mathcal{E}_u(R,\bar{u})$ and $\zeta(t) \in \mathcal{E}_n(\Xi,\bar{\xi})$, (ii) the projection of the invariant ellipsoid $\mathcal{E}_\zeta(Q)$ onto the $x_p$-hyperplane, i.e. $\mathcal{E}_{x_p}(Q_{x_p})$ is a subset of the safe set $\mathcal{E}_s(\Psi,\bar{\psi})$, and (iii) the $H_\infty$-norm of $T_{u_c \rightarrow z}(s)$ is upper bounded by $\gamma$. 
\end{problem}

Finally, just before formulating our main result, we briefly describe the procedure for obtaining the main result. Because $\kappa:=$ ($A_f$, $B_f$, $C_f$, $D_f$) are variables in the synthesis problem, the blocks $Q A$ and $Q B$ in \eqref{eq:lem_M} are nonlinear in ($\kappa$,$Q$). Following the results in \cite{Scherer_lin}, we propose an invertible linearizing change of variables such that, in the new variables, the objective of the optimization problem \textbf{OP} is convex and the constraints are affine.

Let $Q$ be positive definite and of the form:
\begin{equation}
Q = \begin{bmatrix}
Y\, & N\\
N^\top\, & \tilde{Y}
\end{bmatrix}\, , \,
Q^{-1} = \begin{bmatrix}
X\, & M\\
M^\top\, & \tilde{X}
\end{bmatrix}
\end{equation}
where $Y$, $N$, $\tilde{Y}$, $X$, $M$, $\tilde{X}$ $\in \mathbb{R}^{n \times n}$; and $Y$, $\tilde{Y}$, $X$, $\tilde{X}$ are positive definite matrices. Define the following matrices
\begin{equation} \label{eq:Pi}
\Pi_1 := \begin{bmatrix}
X\, & I_n\\
M^\top\, & \mathbf{0}
\end{bmatrix}\, , \,
\Pi_2 := \begin{bmatrix}
I_n\, & Y\\
\mathbf{0}\, & N^\top
\end{bmatrix}.
\end{equation}
It is easy to verify that $Q \Pi_1 = \Pi_2$. Define the change of filter variables as follows:
\begin{equation} \label{eq:change}
    \begin{cases}
      \hat{A}_f := Y A_p X + Y B_p \Gamma_f C_f M^\top + N A_f M^\top\\
      \hat{B}_f := Y B_p \Gamma_f D_f + N B_f\\
      \hat{C}_f := C_f M^\top\\
      \hat{D}_f := D_f
    \end{cases}
\end{equation}
with $\hat{A}_f \in \mathbb{R}^{n_p \times n_p}$, $\hat{B}_f \in \mathbb{R}^{n_p \times m}$, $\hat{C}_f \in \mathbb{R}^{m \times n_p}$, $\hat{D}_f \in \mathbb{R}^{m \times m}$. Note that if $M$ and $N$ have full row rank, and $\hat{A}_f$, $\hat{B}_f$, $\hat{C}_f$, $\hat{D}_f$, $X$, and $Y$ are given, we can extract the true filter matrices $A_f$, $B_f$, $C_f$, $D_f$ satisfying \eqref{eq:change}. 

We can now formulate our main result.

\begin{theorem} \label{th:two}
Consider system \eqref{eq:extended_system} with system matrices as defined in \eqref{eq:matrice}. \noindent If $\Xi_p$ and $\Psi$ are invertible, and there exist $X \in \mathbb{R}^{n \times n}$, $Y \in \mathbb{R}^{n \times n}$ with $n_f = n_p$, $\hat{A}_f \in \mathbb{R}^{n_p \times n_p}$, $\hat{B}_f \in \mathbb{R}^{n_p \times m}$, $\hat{C}_f \in \mathbb{R}^{m \times n_p}$, $\hat{D}_f \in \mathbb{R}^{m \times m}$, and $\alpha$, $\beta$, $\lambda$, $\delta$, $\gamma$, $\epsilon$ $\in \mathbb{R}_{\geq 0}$ for which the following inequalities are satisfied:

\begin{equation} \label{eq:thTwo1}
-E' - \alpha F' - \beta S' - \lambda T' \succeq 0,
\end{equation}
\begin{equation} \label{eq:thTwo2}
-J - \delta W \succeq 0,
\end{equation}
\begin{equation} \label{eq:thTwo3}
-L \succeq 0,
\end{equation}
\begin{equation} \label{eq:thTwo4}
\mathcal{Q}(\kappa) \succ 0.
\end{equation}
with
\begin{align*}
E' = &\begin{bmatrix}
\mathcal{A}(\kappa)^\top + \mathcal{A}(\kappa)\,  & \mathbf{0}\, & \mathcal{B}(\kappa)\\
\ast\, &  \mathbf{0}\, & \mathbf{0}\\
\ast\, &  \ast\, & \mathbf{0}
\end{bmatrix},\\
F' = &\begin{bmatrix}
\mathcal{Q}(\kappa)\, & \mathbf{0}\,  & \mathbf{0}\\
\ast\, & -1\, & \mathbf{0}\\
\ast\, & \ast\, & \mathbf{0}
\end{bmatrix},\\
S' = &\begin{bmatrix}
\mathbf{0}\, & \mathbf{0}\, & \mathbf{0}\\
\ast\, & 1-\bar{u}^\top R \bar{u}\, & \bar{u}^\top R\\
\ast\, & \ast\, & -R
\end{bmatrix},\\
T' = &\begin{bmatrix}
-\mathcal{G}\, & \mathcal{H}\, & \mathbf{0}\\
\ast\, & 1-\bar{\xi}^\top \Xi \bar{\xi}\, & \mathbf{0}\\
\ast\, & \ast\, & \mathbf{0}
\end{bmatrix},\\
J = &\begin{bmatrix}
\mathbf{0}\, & -X \Psi \bar{\psi}\, & -X\\
\ast\, & -1+\bar{\psi}^\top \Psi \bar{\psi}\, & \mathbf{0}\\
\ast\, & \ast\, & -\Psi^{-1}
\end{bmatrix},\\
W = &\begin{bmatrix}
-X\, & \mathbf{0}\, & \mathbf{0}\\
\ast\, & 1\, & \mathbf{0}\\
\ast\, & \ast\, & \mathbf{0}
\end{bmatrix},\\
L = &\begin{bmatrix}
\mathcal{A}(\kappa)^\top + \mathcal{A}(\kappa)\, & \mathcal{B}(\kappa)\, & \mathcal{C}_z(\kappa)^\top\\
\ast\, & -(\gamma - \epsilon)I\, & D_z^\top\\
\ast\, & \ast\, & -\gamma I
\end{bmatrix},
\end{align*}
where
\begin{align*}
&\Pi_1^\top Q A \Pi_1 =
\begin{bmatrix}A_p X + B_p \Gamma_f \hat{C_f}\, & A_p\\
			\hat{A}_f\, & Y A_p
\end{bmatrix} =: \mathcal{A}(\kappa)
\,
,
\\
&\Pi_1^\top Q B =
\begin{bmatrix} B_p \Gamma_f \hat{D}_f + B_p \Gamma_c\\
			 \hat{B}_f + Y B_p \Gamma_c
\end{bmatrix} =: \mathcal{B}(\kappa)
\,
,
\\
&C_z \Pi_1 =
\begin{bmatrix} \hat{C}_f\, & \mathbf{0}
\end{bmatrix} =: \mathcal{C}_z(\kappa)
\,
,
\\
&\Pi_1^\top Q \Pi_1 =
\begin{bmatrix} X\, & I\\
			 I\, & Y
\end{bmatrix} =: \mathcal{Q}(\kappa)
\,
,
\\
&\Pi_1^\top \Xi \bar{\xi} =
\begin{bmatrix} X \Xi_p \bar{\xi}_p\\
			\Xi_p \bar{\xi}_p
\end{bmatrix} =: \mathcal{H}
\,
,
\\
&\begin{bmatrix} 2X - \Xi_p^{-1}\, &  X \Xi_p\\
			 \Xi_p X\, &  \Xi_p
\end{bmatrix} =: \mathcal{G};
\end{align*}
\noindent then,
$\zeta(t_0) ^\top Q \zeta(t_0) \leqslant 1 \Rightarrow
\zeta(t) ^\top Q \zeta(t) \leqslant 1$, $\mathcal{E}_{x_p}(X^{-1}) \subseteq \mathcal{E}_s(\Psi, \bar{\psi})$, and $\lvert \lvert T_{u_c \rightarrow z}(s) \rvert \rvert_{H_\infty} \le \gamma$, for all $t \ge t_0$, $u_c(t) \in \mathcal{E}_u(R, \bar{u})$, and $\zeta(t) \in \mathcal{E}_n(\Xi, \bar{\xi})$.
\end{theorem}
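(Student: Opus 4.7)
The plan is to establish the three conclusions of the theorem separately, each time using a standard convex-analysis tool and then applying Scherer's linearizing change of variables \eqref{eq:change} together with the congruence induced by $\Pi_1$ in \eqref{eq:Pi}. For invariance of $\mathcal{E}_\zeta(Q)$ the underlying tool is Lemma~\ref{lem:invar}; for the safety inclusion $\mathcal{E}_{x_p}(X^{-1}) \subseteq \mathcal{E}_s(\Psi,\bar{\psi})$ it is the S-procedure applied to two quadratic constraints; and for the $H_\infty$ bound on $T_{u_c \to z}(s)$ it is the Bounded Real Lemma. In each case the base inequality is nonlinear in $(\kappa,Q)$, but after multiplying on both sides by an appropriately sized congruence matrix built from $\Pi_1$ and applying the substitution \eqref{eq:change}, all unknowns collapse to $X, Y, \hat{A}_f, \hat{B}_f, \hat{C}_f, \hat{D}_f$ in affine form, thanks to the key identity $Q \Pi_1 = \Pi_2$.

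First I would prove invariance. Applying the congruence $\mathrm{diag}(\Pi_1, 1, I_m)$ to the inequality of Lemma~\ref{lem:invar}, and exploiting $Q \Pi_1 = \Pi_2$, the blocks $A^\top Q + Q A$, $Q B$, and $Q$ transform into $\mathcal{A}(\kappa)^\top + \mathcal{A}(\kappa)$, $\mathcal{B}(\kappa)$, and $\mathcal{Q}(\kappa)$, turning $E$ and $F$ into $E'$ and $F'$. The block $S$ is independent of $(\kappa, Q)$ and simply maps to $S'$. The delicate block is $T$: its upper-left $-\Xi$ yields, under the congruence, the term $-\Pi_1^\top \Xi \Pi_1$, whose $(1,1)$-entry is $-X \Xi_p X$. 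To keep the LMI affine I would bound it via the Young-type inequality $X \Xi_p X \succeq 2X - \Xi_p^{-1}$, valid whenever $\Xi_p \succ 0$ (one of the theorem's standing hypotheses); this produces exactly the block $\mathcal{G}$, while $\Pi_1^\top \Xi \bar{\xi} = \mathcal{H}$ handles the cross terms.

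Second, for the safety inclusion, the projection of $\mathcal{E}_\zeta(Q)$ onto the $x_p$-hyperplane has shape matrix $Y - N \tilde{Y}^{-1} N^\top = X^{-1}$ by the block-inversion identity for $Q$, cf.\ \cite{MURGUIA2020108757}. The inclusion $\mathcal{E}_{x_p}(X^{-1}) \subseteq \mathcal{E}_s(\Psi, \bar{\psi})$ is equivalent, via the S-procedure, to the existence of $\delta \geq 0$ such that $(x_p - \bar{\psi})^\top \Psi (x_p - \bar{\psi}) - 1 - \delta (x_p^\top X^{-1} x_p - 1) \leq 0$ for every $x_p$. Writing this as a matrix inequality, performing a congruence by $\mathrm{diag}(X, 1)$ to remove the $X^{-1}$, and applying a Schur complement against $\Psi^{-1}$ (which requires invertibility of $\Psi$, as assumed) yields exactly $-J - \delta W \succeq 0$. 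Third, for the performance constraint, invoking the Bounded Real Lemma with $P = Q$ and applying the congruence $\mathrm{diag}(\Pi_1, I_m, I_m)$ maps $A^\top Q + Q A$, $Q B$, and $C_z$ to $\mathcal{A}(\kappa)^\top + \mathcal{A}(\kappa)$, $\mathcal{B}(\kappa)$, and $\mathcal{C}_z(\kappa)$, yielding the LMI $-L \succeq 0$; the slack $\epsilon$ in the $(2,2)$-block of $L$ converts the strict inequality of the Bounded Real Lemma into a non-strict one. Finally, $Q \succ 0$ follows from $\mathcal{Q}(\kappa) \succ 0$ via $\Pi_1^\top Q \Pi_1 = \mathcal{Q}(\kappa)$ and invertibility of $\Pi_1$ (which is guaranteed once $M$ and $N$ have full rank).

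The main obstacle I anticipate is keeping the three LMI derivations consistent under a \emph{common} $Q$, equivalently a common $(X, Y)$ pair, so that the filter reconstructed from $(\hat{A}_f, \hat{B}_f, \hat{C}_f, \hat{D}_f)$ via \eqref{eq:change} simultaneously enforces invariance, safety, and the $H_\infty$ bound. The Young-type relaxations $X \Xi_p X \succeq 2X - \Xi_p^{-1}$ and its analogue for $\Psi$ are the source of conservatism and the structural reason why invertibility of $\Xi_p$ and $\Psi$ must be assumed; relaxing this assumption would require a different parametrization and lies outside the scope of the theorem.
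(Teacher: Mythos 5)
Your plan follows essentially the same route as the paper's proof: invariance via Lemma~\ref{lem:invar} and the S-procedure after the congruence by $\Pi_1$ (with the bound $X\Xi_p X \succeq 2X - \Xi_p^{-1}$ yielding $\mathcal{G}$), the safety inclusion via the S-procedure, the block-inversion identity $Y - N\tilde{Y}^{-1}N^\top = X^{-1}$ and a Schur complement against $\Psi^{-1}$, and the $H_\infty$ bound via the Bounded Real Lemma with storage matrix $Q$. The only slip is cosmetic: the closing reference to a Young-type relaxation ``for $\Psi$'' is inaccurate (there the $\Psi^{-1}$ block arises from the Schur complement, as you correctly state earlier), so this does not affect the argument.
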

\begin{proof}
Firstly, consider \eqref{eq:thTwo3} then the Schur complement of the lower-right corner block matrix $-\gamma I$ of the matrix $L$ is the matrix defined by
\begin{align}
L &= \nonumber \\
&\begin{bmatrix}
\mathcal{A}(\kappa)^\top + \mathcal{A}(\kappa)\, & \mathcal{B}(\kappa)\\
\ast\, & -(\gamma - \epsilon)I
\end{bmatrix}
+
\frac{1}{\gamma}
\begin{bmatrix}
\mathcal{C}_z(\kappa)^\top\\
D_z^\top
\end{bmatrix}
\begin{bmatrix}
\mathcal{C}_z(\kappa)^\top\\
D_z^\top
\end{bmatrix}^\top
\end{align}
Then, left and right multiply by $[\zeta(t)^\top, u_c(t)^\top]^\top$; this implies that $\dot{V}(\zeta) -(\gamma - \epsilon)\lvert \lvert u_c(t) \rvert \rvert^2 + \frac{1}{\gamma} \lvert \lvert z(t) \rvert \rvert^2 \le 0$, which implies that $\lvert \lvert T_{u_c \rightarrow z}(s) \rvert \rvert_{H_\infty} \le \gamma$, under the change of variables.

Secondly, consider \eqref{eq:thTwo2}; left and right multiply by $[\zeta(t)^\top, 1, u_c(t)^\top]^\top$, and consider $\delta W$ as an S-procedure term by a positive multiplier $\delta$; this implies with the S-procedure under the change of variables:

\begin{align*}
&-[\zeta^\top, 1, u_c^\top] \,J\, [\zeta^\top, 1, u_c^\top]^\top \ge 0 \Leftrightarrow x_p(t) \in \mathcal{E}_s(\Psi, \bar{\psi})\\
&\text{when}\\
&[\zeta^\top, 1, u_c^\top] \, W \, [\zeta^\top, 1, u_c^\top]^\top \ge 0 \Leftrightarrow x_p(t) \in \mathcal{E}_{x_p}(X^{-1})
\end{align*}
as the projection of $\mathcal{E}_\zeta$ onto the $x_p$-hyperplane is defined as $\mathcal{E}_{x_p}(Y - N \tilde{Y}^{-1} N^\top)$ from \eqref{eq:proj} with $Y - N \tilde{Y}^{-1} N^\top = X^{-1}$ using block matrix inversion formulas.

This means that the plant state trajectories never leave the safe set for any plant state trajectories inside the invariant ellipsoidal set.

Lastly, consider \eqref{eq:thTwo1}; left and right multiply by $[\zeta(t)^\top, 1, u_c(t)^\top]^\top$, and consider $\alpha F'$, $\beta S'$, $\lambda T'$ as S-procedure terms by positive multipliers $\alpha$, $\beta$, and $\lambda$; this implies with the S-procedure under the change of variables:

\begin{align*}
&[\zeta^\top, 1, u_c^\top] \,E'\, [\zeta^\top, 1, u_c^\top]^\top= \dot{V}(\zeta)  \leqslant 0\\
&\text{when}\\
&[\zeta^\top, 1, u_c^\top] \, F' \, [\zeta^\top, 1, u_c^\top]^\top = V(\zeta)-1 \geqslant 0 \Leftrightarrow V(\zeta) \geqslant 1\\
&[\zeta^\top, 1, u_c^\top] \, S' \, [\zeta^\top, 1, u_c^\top]^\top  \geqslant 0 \Leftrightarrow u_c(t) \in  \mathcal{E}_u(R, \bar{u}),\\
&[\zeta^\top, 1, u_c^\top] \, T' \, [\zeta^\top, 1, u_c^\top]^\top  \geqslant 0 \Leftrightarrow \zeta(t) \in  \mathcal{E}_n(\Xi, \bar{\xi})
\end{align*}
where $\mathcal{G}$ is a lower bound of $X \Xi_p X$ coming from the change of variables:
\begin{equation}
\Pi_1^\top \Xi \Pi_1 =
\begin{bmatrix} X \Xi_p X\, &  X \Xi_p\\
			 \Xi_p X\, &  \Xi_p
\end{bmatrix}
\end{equation}
This means that the value of $V(\zeta)$ can only increase under the stated constraints, i.e.\ $V(\zeta(t_0))\leqslant 1 \Rightarrow  V(\zeta(t))\leqslant 1$ $\forall t \ge t_0$.
\end{proof}
\begin{remark}[Non-stealthy case]
In order to consider the  case of non-stealthy attacks, just remove the term $\lambda T'$ from \eqref{eq:thTwo1} in Theorem~\ref{th:two}. This term provides the extra constraint needed only if the attacker tries to keep the system within the normal operation set $X_n$; so if no normal operation set is defined then the term is not present.
\end{remark}

\begin{remark}
Theorem~\ref{th:two} is conservative, i.e. it provides sufficient but not necessary conditions, for two main reasons: 1) the Lyapunov function is constrained to be of a specific form, 2) the use of the generalized S-procedure for obtaining \eqref{eq:thTwo1} and \eqref{eq:thTwo2}, and 3) the lower bound of $X \Xi_p X$.
\end{remark}

Due to the product of $\alpha$ with $\mathcal{Q}(\kappa)$, $\lambda$ with $\mathcal{G}$, $\mathcal{H}$, and $\delta$ with $X$, the matrix inequalities \eqref{eq:thTwo1} and \eqref{eq:thTwo2} in Theorem~\ref{th:two} are not LMIs. To relax it, the invariant ellipsoidal set $\mathcal{E}_\zeta$ together with $\kappa$ will be computed for a fixed $\alpha, \lambda, \delta \ge 0$.

After having provided the sufficient conditions for synthesizing a filter $(A_f$, $B_f$, $C_f$, $D_f$) that guarantees that the projection of the invariant ellipsoidal set $\mathcal{E}_\zeta(Q)$ onto the $x_p$-hyperplane is a subset of the safe set $\mathcal{E}_s(\Psi, \bar{\psi})$, i.e. the plant state trajectories remain inside the safe set, and the $H_\infty$-norm of the transfer channel $T_{u_c \rightarrow z}(s)$ is below a $\gamma$ gain, i.e. $\lvert \lvert T_{u_c \rightarrow z}(s) \rvert \rvert_{H_\infty} \le \gamma$, we want to compute the smallest invariant ellipsoidal set on $\mathcal{R}_{\zeta}(\infty)$. This is obtained by maximizing the trace of $Q$ which is similar to minimize the trace of $X$ under the change of variables. Thus, we want to solve the following optimization problem under LMI constraints, \textbf{OP}.\\

\noindent \textbf{OP: \quad Filter synthesis}
\vspace{-1mm}
  \begin{align*}
    \underset{X, Y, \hat{A}_f, \hat{B}_f, \hat{C}_f, \hat{D}_f, \beta}{\text{minimize}} & \qquad \textrm{trace}(X)\\
  \text{subject to}
& \qquad \eqref{eq:thTwo1}, \eqref{eq:thTwo2}, \eqref{eq:thTwo3}, \eqref{eq:thTwo4}
  \end{align*}

\section{Example} \label{sec:example}
In this section, we propose to apply the proposed method on a dynamical system having two actuators. Consider stealthy actuator attacks that aim to damage the actuators by injecting malicious control signals from either the controller or the communication network. First, we analyze the effect of stealthy actuator attacks using the Lemma~\ref{lem:invar}. Then, we synthesize a filter to prevent such attacks by following the proposed method. We use the solver Mosek with the Yalmip toolbox on Matlab to solve the optimization problems.

\subsection{Description of the system}
Consider the dynamical system in \eqref{eq:system_example} with $x_p = [x_{p1}, x_{p2}, x_{p3}]^\top$ ($n_p = 3$) and $u_p = [u_{p1}, u_{p2}]^\top$ ($m = 2$) where $x_{p1}$ is the process state and $x_{p2}$, $x_{p3}$ are the states of two actuators.
\begin{equation}\label{eq:system_example}
A_p =
\begin{bmatrix}
-10\, & 10\, & 10\\
0\, & -150\, & 0\\
0\, & 0\, & -150
\end{bmatrix}
\,
,
\,
B_p =
\begin{bmatrix}
0\, & 0\\
100\, & 0\\
0\, & 100
\end{bmatrix}
\end{equation}
Consider the input set $\mathcal{E}_u(R,\bar{u})$ defined for $R = \mbox{diag(}0.25, 0.25\mbox{)}$, $\bar{u} = [0, 0]^\top$, the safe set $\mathcal{E}_s(\Psi, \bar{\psi})$ defined for $\Psi = \mbox{diag(}0.001, 0.0156, 0.0156\mbox{)}$, $\bar{\psi} = [0, 0, 0]^\top$, and the normal operation set $\mathcal{E}_n(\Xi, \bar{\xi})$, defined for $\Xi_p = \mbox{diag(}0.01, 0.001, 0.001\mbox{)}$, $\bar{\psi}_p = [0, 0 ,0]^\top$. This means that the process state $x_{p1}$ is constrained by the normal operation set $\mathcal{E}_n$, whereas the actuator states are not. The safe set defines a safe zone where the actuator states must remain in order to not damage the actuators, that is the actuator states $x_{p2}$, $x_{p3}$ are constrained by the safe set $\mathcal{E}_s$, whereas the process state is not.

\subsection{Attack analysis}
Consider first the problem of analyzing the effect of stealthy actuator attacks to the plant. This problem consists in computing the smallest invariant ellipsoidal set by solving an optimization problem maximizing the trace of Q under constraints \eqref{eq:thun1}, \eqref{eq:thun2} for a fixed $\alpha \ge 0$. Consider that no filter is placed, i.e. $u_p(t) = u_c(t)$ (see Figure~\ref{fig:sys_with_filter}). This can be set by letting $n_f = 0$, that is the filter state $x_f$ can be removed from the extended system in \eqref{eq:extended_system}, so $A_f$, $B_f$, and $C_f$ do not exist, and $D_f = I_m$.
For $\alpha = 0.5$, the result is drawn in Figure~\ref{fig:analysis} (left-hand side) where the projection of the smallest invariant ellipsoidal set $\mathcal{E}_\zeta$ onto the $x_p$-hyperplane is the ellipsoid filled in green, the safe set $\mathcal{E}_s$ is the ellipsoid filled in blue, and the normal operation set $\mathcal{E}_n$ is the ellipsoid filled in magenta.

\begin{figure*}[h]
    \centering
    \begin{subfigure}[b]{\textwidth} 
    	\includegraphics[width=0.5\textwidth]{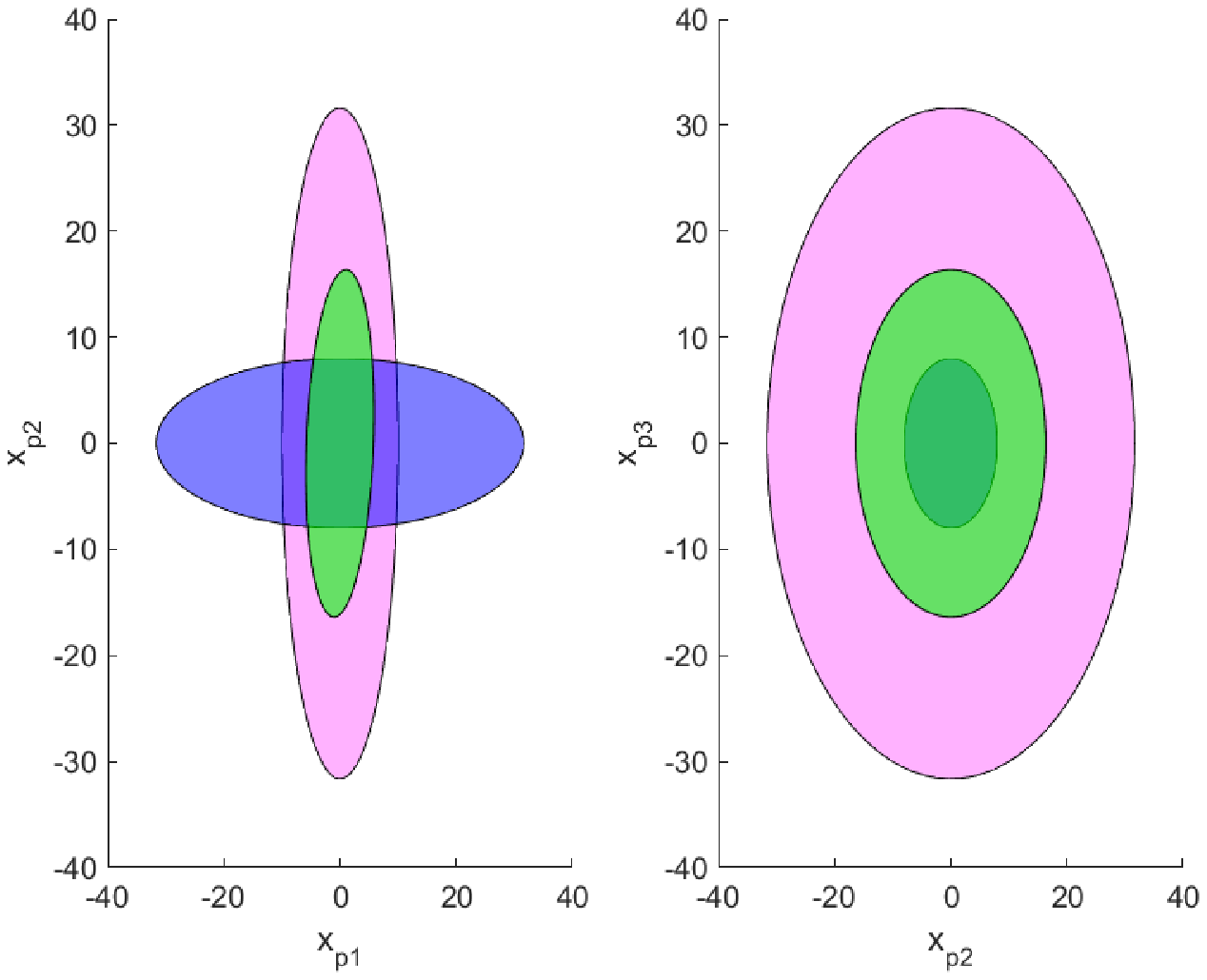}
    	\includegraphics[width=0.5\textwidth]{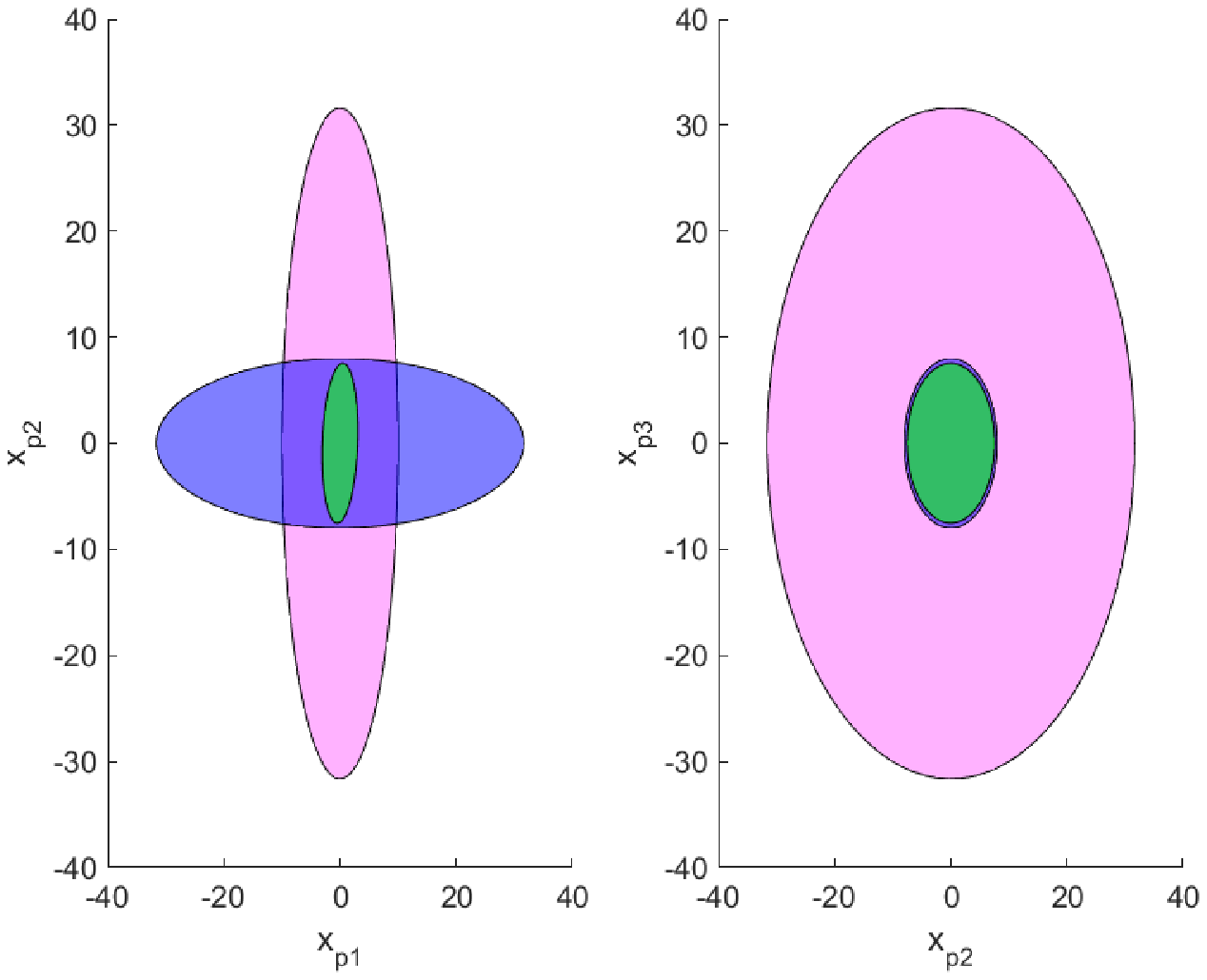}
    \end{subfigure}
    \caption{Projection of the invariant ellipsoidal set $\mathcal{E}_\zeta(Q)$ onto the $x_p$-hyperplane (green fill), normal operation set $\mathcal{E}_n(\Xi,\bar{\xi})$ (magenta fill), safe set $\mathcal{E}_s(\Psi, \bar{\psi})$ (blue fill) - Attack analysis (left-hand side): stealthy actuator attacks are feasible as $\mathcal{E}_{x_p} \nsubseteq \mathcal{E}_s$ - Attack prevention (right-hand side): stealthy actuator attacks are prevented with the computed filter as $\mathcal{E}_{x_p} \subseteq \mathcal{E}_s$}\label{fig:analysis}
\end{figure*}

We can easily observe that the projection of the invariant ellipsoidal set onto the $x_p$-hyperplane is not a subset of the safe set, that is stealthy actuator attacks are feasible.

\subsection{Attack prevention: synthesis of the filter}
Consider now the synthesis problem of the filter. This problem consists in finding the filter matrices ($A_f$, $B_f$, $C_f$, $D_f$) such that the projection of the invariant ellipsoidal set $\mathcal{E}_\zeta(Q)$ onto the $x_p$-hyperplane, i.e. $\mathcal{E}_{x_p}$, is a subset of the safe set $\mathcal{E}_s(\Psi,\bar{\psi})$ and the $H_\infty$-norm of the transfer matrix $T_{u_c \rightarrow z}(s)$ is below a given level $\gamma$, i.e. $\lvert \lvert T_{u_c \rightarrow z}(s) \rvert \rvert_{H_\infty} \le \gamma$. The synthesis problem is answered by solving the optimization problem \textbf{OP} for fixed $\alpha, \lambda, \delta, \gamma, \epsilon \ge 0$

Consider that the filter we want to synthesize actuates on the entire control signals $u_c(t)$ transmitted to the plant, i.e. $\Gamma_f = I_m$, $\Gamma_c = \mathbf{0}$. For $\alpha = 1$, $\beta = 0.4999$, $\lambda = 0.5$, $\delta = 0.9$, $\gamma = 0.61$, and $\epsilon = 10^{-8}$, the optimization problem is solved. The result is drawn in Figure~\ref{fig:analysis} (right-hand side) where the projection of the smallest invariant ellipsoidal set $\mathcal{E}_\zeta$ onto the $x_p$-hyperplane is the ellipsoid filled in green, the safe set $\mathcal{E}_s$ is the ellipsoid filled in blue, and the normal operation set $\mathcal{E}_n$ is the ellipsoid filled in magenta.

\noindent The obtained filter matrices are given as follows.
\begin{align}\label{eq:filter_example}
A_f &=
\begin{bmatrix}
-12.75\, & 22.55\, & 22.55\\
8.35\, & -151.39\, & 1.31\\
8.35\, & 1.31\, & -151.39
\end{bmatrix}
\,
, \nonumber
\\ \nonumber
B_f &=
\begin{bmatrix}
-549.65\, & -549.65\\
-647.31\, & -35.94\\
-35.94\, & -647.31
\end{bmatrix}
\,
,
\\ \nonumber
C_f &=
\begin{bmatrix}
-1 \times 10^{-4}\, & 1.7 \times 10^{-3}\, & 0\\
-1 \times 10^{-4}\, & 0\, & 1.7 \times 10^{-3}
\end{bmatrix}
\,
,
\\
D_f &=
\begin{bmatrix}
0.46\, & 2 \times 10^{-4}\\
2 \times 10^{-4}\, & 0.46
\end{bmatrix}
\end{align}

We can observe that the projection of the invariant ellipsoidal set onto the $x_p$-hyperplane is now a subset of the safe set by filtering the control signals $u_c(t)$ with the obtained filter in \eqref{eq:filter_example}, meaning that stealthy actuator attacks are not feasible.

In Figure~\ref{fig:bode_fil_eps} (left-hand side), the Bode diagrams of the system without the filter, i.e. the plant only, (blue) and with the filter (red), i.e. the plant in series with the filter, are drawn. Firstly, we can observe that the filter only changes the system dynamics slightly, which is due to the imposed constraint $\lvert \lvert T_{u_c \rightarrow z}(s) \rvert \rvert_{H_\infty} \le \gamma$. Secondly, we can see that by placing the filter there are now a relationship between the actuator state $x_{p3}(t)$ and the control input $u_{c1}(t)$ and the actuator state $x_{p2}$ and the control input $u_{c2}(t)$.

Figure~\ref{fig:bode_fil_eps} (right-hand side) shows the Bode plot of the filter alone. It can be noticed that the action is not the same at all frequencies, which confirms the idea mentioned in the Introduction, i.e.\ that this approach by dynamical constraints (filter) is less restrictive than the hard static bound proposed in  \cite{esc_iet}. It is interesting to point out that the filter works mainly on the coupling between the actuators (i.e.\ $u_1$ vs  $u_2$), and with a very small gain it manages to reduce the reachable set within the safe set in its entirety (right-hand side of Figure~\ref{fig:analysis}).

\begin{figure*}[h]
    \centering
    \begin{subfigure}[b]{\textwidth} 
    	\includegraphics[width=0.5\textwidth, trim = 0.9cm 0cm 1.1cm 0cm, clip]{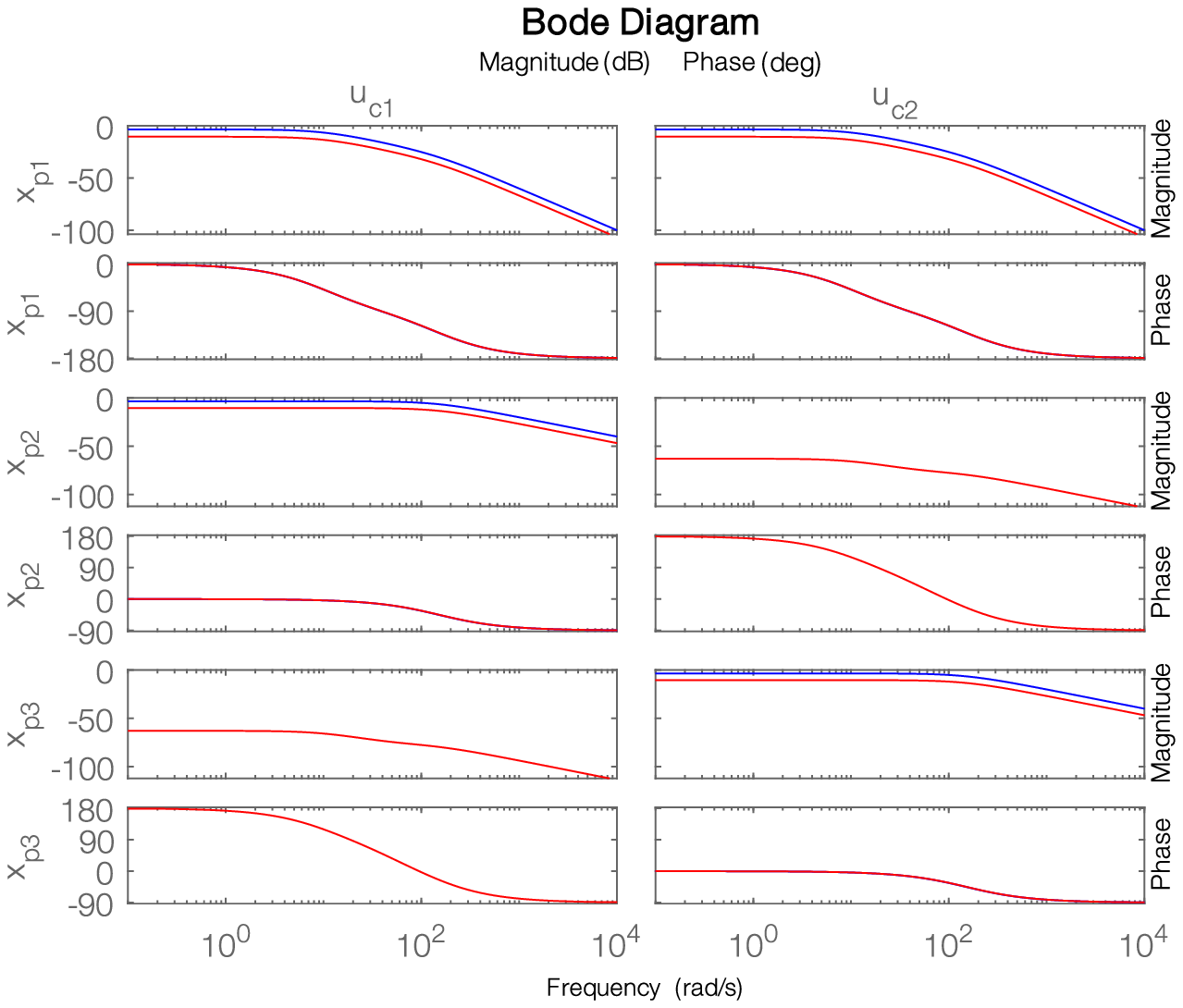}
        \includegraphics[width=0.5\textwidth, trim = 1.7cm 0cm 1.1cm 0cm, clip]{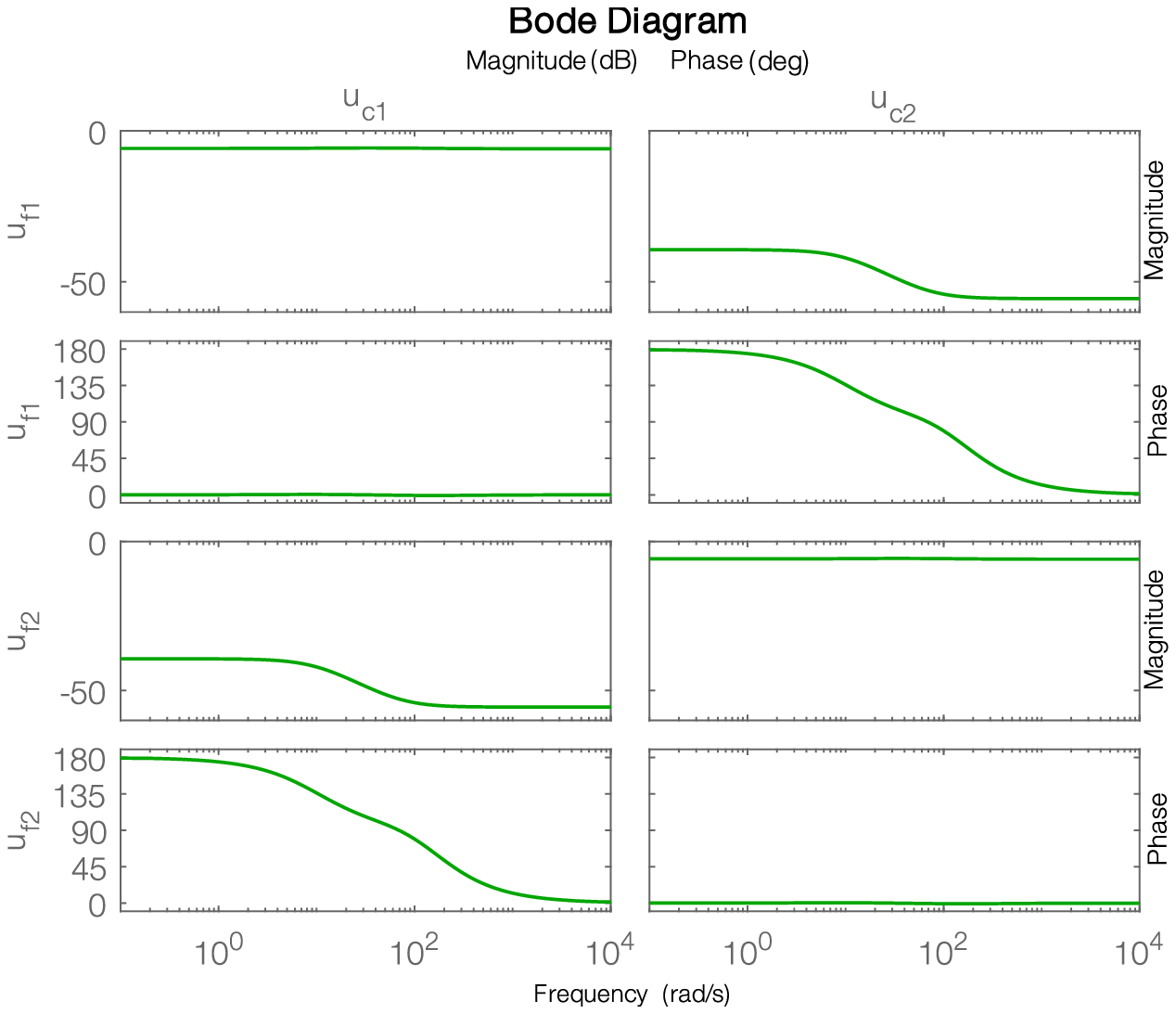}
    \end{subfigure}

    \caption{(Left-hand side) Bode diagrams of the system without the filter (blue) and with the filter (red) - (Right-hand side) Bode diagrams of the filter alone}\label{fig:bode_fil_eps}
\end{figure*}
%
%
%

\section{Conclusion}
In this manuscript, we have proposed a set-theoretical method to synthesize filters working on the controller output received from a communication network to prevent actuator injection attacks. The focus of this work has been on the introduction of this novel idea, with some preliminary results and an academic example; on the other hand we can envisage several extensions as topics of future research. First, future works will address the controllability of the plant where the filter is placed. The second issue to be addressed as a perspective is the co-design of a controller together with the filter to satisfy a trade-off between safety and control objectives. At last, so far we have considered filters without state feedback from sensors, but we could extend them by considering that some states are securely monitored and sent back to the filter, allowing for less restrictive filtering.

\bibliographystyle{IEEEtran}
\bibliography{main}

\begin{thebibliography}{10}
\providecommand{\url}[1]{#1}
\csname url@rmstyle\endcsname
\providecommand{\newblock}{\relax}
\providecommand{\bibinfo}[2]{#2}
\providecommand\BIBentrySTDinterwordspacing{\spaceskip=0pt\relax}
\providecommand\BIBentryALTinterwordstretchfactor{4}
\providecommand\BIBentryALTinterwordspacing{\spaceskip=\fontdimen2\font plus
\BIBentryALTinterwordstretchfactor\fontdimen3\font minus
  \fontdimen4\font\relax}
\providecommand\BIBforeignlanguage[2]{{%
\expandafter\ifx\csname l@#1\endcsname\relax
\typeout{** WARNING: IEEEtran.bst: No hyphenation pattern has been}%
\typeout{** loaded for the language `#1'. Using the pattern for}%
\typeout{** the default language instead.}%
\else
\language=\csname l@#1\endcsname
\fi
#2}}

\bibitem{Lee_CPS}
E.~A. Lee, ``Cyber physical systems: Design challenges,'' in \emph{2008 11th
  IEEE International Symposium on Object and Component-Oriented Real-Time
  Distributed Computing (ISORC)}, 2008, pp. 363--369.

\bibitem{Laughlin}
S.~{McLaughlin}, C.~{Konstantinou}, X.~{Wang}, L.~{Davi}, A.~{Sadeghi},
  M.~{Maniatakos}, and R.~{Karri}, ``The cybersecurity landscape in industrial
  control systems,'' \emph{Proc. IEEE}, vol. 104, no.~5, pp. 1039--1057, May
  2016.

\bibitem{fromics}
F.~{Sicard}, C.~{Escudero}, E.~{Zamai}, and J.~{Flaus}, ``From {ICS} attacks'
  analysis to the {S.A.F.E.} approach: Implementation of filters based on
  behavioral models and critical state distance for {ICS} cybersecurity,'' in
  \emph{Proc. CSNet}, Oct 2018, pp. 1--8.

\bibitem{DIBAJI2019394}
S.~M. Dibaji, M.~Pirani, D.~B. Flamholz, A.~M. Annaswamy, K.~H. Johansson, and
  A.~Chakrabortty, ``A systems and control perspective of cps security,''
  \emph{Annu. Rev. Control}, vol.~47, pp. 394 -- 411, 2019.

\bibitem{Teixeira}
A.~Teixeira, D.~P{\'e}rez, H.~Sandberg, and K.~H. Johansson, ``Attack models
  and scenarios for networked control systems,'' in \emph{Proceedings of the
  1st International Conference on High Confidence Networked Systems}, ser.
  HiCoNS '12.\hskip 1em plus 0.5em minus 0.4em\relax New York, NY, USA: ACM,
  2012, pp. 55--64.

\bibitem{Giraldo}
J.~Giraldo, D.~Urbina, A.~Cardenas, J.~Valente, M.~Faisal, J.~Ruths, N.~O.
  Tippenhauer, H.~Sandberg, and R.~Candell, ``A survey of physics-based attack
  detection in cyber-physical systems,'' \emph{ACM Comput. Surv.}, vol.~51,
  no.~4, pp. 1--36, July 2018.

\bibitem{BlanchiniBook}
F.~Blanchini and S.~Miani, \emph{Set-Theoretic Methods in Control}, 1st~ed.,
  2007.

\bibitem{MURGUIA2020108757}
C.~Murguia, I.~Shames, J.~Ruths, and D.~Nešić, ``Security metrics and
  synthesis of secure control systems,'' \emph{Automatica}, vol. 115, p.
  108757, May 2020.

\bibitem{Kafash}
S.~{Hadizadeh Kafash}, N.~{Hashemi}, C.~{Murguia}, and J.~{Ruths},
  ``Constraining attackers and enabling operators via actuation limits,'' in
  \emph{2018 IEEE Conference on Decision and Control (CDC)}, Dec 2018, pp.
  4535--4540.

\bibitem{Kafashb}
S.~Hadizadeh~Kafash, N.~Hashemi, C.~Murguia, and J.~Ruths, ``Constraining
  attackers and enabling operators via actuation limits,'' in \emph{2018 IEEE
  Conference on Decision and Control (CDC)}, 2018, pp. 4535--4540.

\bibitem{Kafash_des}
J.~Giraldo, S.~H. Kafash, J.~Ruths, and A.~A. Cardenas, ``Daria: Designing
  actuators to resist arbitrary attacks against cyber-physical systems,'' in
  \emph{2020 IEEE European Symposium on Security and Privacy (EuroS P)}, 2020,
  pp. 339--353.

\bibitem{Sinopoli_reach}
Y.~Mo and B.~Sinopoli, ``On the performance degradation of cyber-physical
  systems under stealthy integrity attacks,'' \emph{IEEE Transactions on
  Automatic Control}, vol.~61, no.~9, pp. 2618--2624, 2016.

\bibitem{Soodeh_rch}
S.~Dadras, S.~Dadras, and C.~Winstead, ``Reachable set analysis of vehicular
  platooning in adversarial environment,'' in \emph{2018 Annual American
  Control Conference (ACC)}, 2018, pp. 5568--5575.

\bibitem{8712588}
C.~{Murguia} and J.~{Ruths}, ``On model-based detectors for linear
  time-invariant stochastic systems under sensor attacks,'' \emph{IET Control
  Theory Appl.}, vol.~13, no.~8, pp. 1051--1061, May 2019.

\bibitem{esc_iet}
C.~{Escudero}, P.~{Massioni}, E.~{Zamai}, and B.~{Raison}, ``Analysis,
  prevention, and feasibility assessment of stealthy ageing attacks on
  dynamical systems,'' \emph{IET Control Theory Appl.}, July 2021.

\bibitem{Gayek}
J.~E. {Gayek}, ``A survey of techniques for approximating reachable and
  controllable sets,'' in \emph{Proc. CDC}, Dec. 1991, pp. 1724--1729.

\bibitem{boyd1994linear}
S.~Boyd, L.~El~Ghaoui, E.~Feron, and V.~Balakrishnan, \emph{Linear matrix
  inequalities in system and control theory}.\hskip 1em plus 0.5em minus
  0.4em\relax SIAM, 1994, vol.~15.

\bibitem{Scherer_lin}
C.~Scherer, P.~Gahinet, and M.~Chilali, ``Multiobjective output-feedback
  control via lmi optimization,'' \emph{IEEE Transactions on Automatic
  Control}, vol.~42, no.~7, pp. 896--911, 1997.

\end{thebibliography}

\end{document}